\newtheorem{theorem}{Theorem}[section]
\theoremstyle{plain}
\newtheorem{lemma}{Lemma}[section]
\newtheorem{remark}{Remark}[section]
\numberwithin{equation}{section}
\begin{document}
\title[]{Bayesian sequential estimation of the reliability of a
parallel-series system}
\author{Z. Benkamra}
\address[Z. Benkamra and M. Tlemcani]{Department of Physics, L.A.A.R \\
University Mohamed Boudiaf Oran, Algeria}
\email[Z. Benkamra]{bzohra@yahoo.com}
\email[M. Tlemcani]{mounir.tlemcani@univ-pau.fr}
\author{ M. Terbeche}
\address[M. Terbeche]{Department of Mathematics.\\
University of Oran, Algeria}
\email[M. Terbeche]{terbeche2000@yahoo.fr}
\author{M. Tlemcani}
\subjclass[2000]{62D05, 62L10, 62L12, 62N05}
\keywords{Reliability, Parallel-series, Two-stage design, Asymptotic
optimality, Martingales}

\begin{abstract}
We give a risk-averse solution to the problem of estimating the reliability of a parallel-series system. We adopt a beta-binomial model
for components reliabilities, and assume that the total sample size for the experience is fixed. The allocation at subsystems or components
level may be random. Based on the sampling schemes for parallel and series systems separately, we propose a hybrid sequential scheme for
the parallel-series system. Asymptotic optimality of the Bayes risk associated with quadratic loss is proved with the help of martingale
convergence properties.
\end{abstract}

\maketitle
\section{Introduction}
A common approach in formulating the problem of reliability estimation relies in a multi-objective optimization problem,
i.e., maximizing the reliability estimate and minimizing the associated variance or risk. For a system with risk-averse designers or
users, the second objective is a major consideration. Because testing resources and budgets are often limited for the design,
a real difficulty lies in determining optimal sampling schemes for testing components \cite{berry}, such that the associated variance or risk
 of the system reliability estimate can be lowered by allocation. The problem reduces, therefore, to optimal allocation rules
which can be solved using dynamic programming techniques and which are costly in practice. Henceforth, and typically for large samples,
asymptotic optimality \textit{via} sequential procedures can be used as an alternative to solve the allocation problem
approximately, cf., e.g., \cite{rekab,woodroofe} and the references therein. We consider in this work the problem of estimating the reliability
of a parallel-series and/or by duality a series-parallel system, where the components reliabilities are independent Bernoulli random variables
with beta priors on their parameters and which are, themselves, independent. We assume that the total sample size for the system is fixed, but the
allocation sample sizes for subsystems or for components may be random. The system reliability looks as multivariate polynomial of components
reliabilities. In \cite {hardwick jp,hardwick j}, sequential estimation of the mean of a bivariate polynomial was discussed.
We seek for a sequential procedure which minimizes the Bayes risk subject to squared error loss. Reliability sequential schemes for a series
system were given in a frequentist framework \cite{rekab ieee}, and a bayesian framework with two components \cite{djerjour}.
We have proposed, in \cite{matcom}, a frequentist solution to the parallel-series systems, where subsystems sample sizes were fixed large
at the same order. Recently, we optimize this solution by considering only the total sample size fixed large, and the key idea is to overlap the
scheme for each parallel subsystem within the scheme for the full system at component level, cf., e.g., \cite{arxiv}.
  
In section (\ref{s1}) we discuss the bayesian framework for a parallel system. The results are used in section (\ref{s2}) to construct properly
the hybrid sequential design for the parallel-series system. The sampling schemes are shown to be asymptotically first order optimal.
\section{Reliability sequential scheme for a parallel system}
\label{s1}
\subsection{Preliminaries for a parallel system}
\label{ss1}
Consider a system $S$ of $n$ components $(1),(2),\ldots ,(n)$
connected in parallel, each one has a reliability $p_{i}$. Assume
beta priors for the Bernoulli proportions $p_{i}$ which are independent:%
\begin{equation*}
p_{i}\sim \beta \left( a_{i},b_{i}\right),
\end{equation*}
where $a_{i}$ and $b_{i}$ are some positive and known parameters.
The system reliability $p$ is%
\begin{equation*}
p=1-\prod\limits_{i=1}^{n}\left( 1-p_{i}\right)
\end{equation*}
Since a Bayesian framework is considered, then an estimator of $p$, subject
to quadratic loss, is assumed to be
\begin{equation*}
\hat{p}=1-\prod\limits_{i=1}^{n}\left( 1-\hat{p_{i}}\right),
\end{equation*}
where $\hat{p_{i}}$ is the posterior mean of the Bernoulli
proportion $p_{i}$ under beta prior, for $i=1,...,n$.
The posterior distribution of $p_{i}$ is given by $\beta
\left( a_{im_{i}},b_{im_{i}}\right) $ with parameters:%
\begin{eqnarray}
a_{im_{i}} &=&a_{i}+\sum_{k=1}^{m_{i}}x_{i}^{(k)},  \label{aimi} \\
b_{im_{i}} &=&b_{i}+m_{i}-\sum_{k=1}^{m_{i}}x_{i}^{(k)},  \label{bimi}
\end{eqnarray}%
where $x_{i}^{(k)}$ is the binary outcome of unit $(k)$ in component $%
(i)$ and $m_{i}$ the corresponding sample size. Denote by $r_{i}=a_{i}+b_{i}$, then
\begin{equation*}
\hat{p}=1-\prod\limits_{i=1}^{n}\frac{b_{im_{i}}}{a_{im_{i}}+b_{im_{i}}}%
=1-\prod\limits_{i=1}^{n}\frac{b_{i}+m_{i}-\sum%
\limits_{k=1}^{m_{i}}x_{i}^{(k)}}{m_{i}+r_{i}},
\end{equation*}%

\subsection{Asymptotics for the Bayes risk}
\label{BR}
Let $m=\sum_{i=1}^{n}{m_{i}}$ the total sample and let $\mathcal{F}_{m}$ the $\sigma $-Field
generated by $X_{1},...,X_{n}$ where $X_{i}=( X_{i}^{(1)},...,X_{i}^{(m_{i})})$.
Since the $p_{i}$ are independent random variables, the posterior expected
loss of $p$ can be written as follows:%
\begin{eqnarray*}
l\left( m_{1},...,m_{n}\right) &=&\prod\limits_{i=1}^{n}E\left[ \left(
1-p_{i}\right) ^{2}/\mathcal{F}_{m}\right] -\prod\limits_{i=1}^{n}E^{2}\left[
\left( 1-p_{i}\right) /\mathcal{F}_{m}\right] \\
&=&\prod\limits_{i=1}^{n}E\left[ \left( 1-p_{i}\right) ^{2}/\mathcal{F}_{m}%
\right] \\
&-&\prod\limits_{i=1}^{n}\left( E\left[ \left( 1-p_{i}\right) ^{2}/\mathcal{F%
}_{m}\right] -\frac{E\left[ p_{i}\left( 1-p_{i}\right) /\mathcal{F}_{m}%
\right] }{m_{i}+r_{i}}\right) \\
&=&\sum\limits_{i=1}^{n}\frac{E\left[ p_{i}\left( 1-p_{i}\right)
\prod\limits_{j\neq i}\left( 1-p_{j}\right) ^{2}/\mathcal{F}_{m}\right] }{%
m_{i}+r_{i}} \\
&+&F\left( \frac{A_{1}}{m_{1}+r_{1}},\cdots ,\frac{A_{n}}{m_{n}+r_{n}}\right),
\end{eqnarray*}

where $A_{i}$ are some random variables bounded by $1$ and $F$ is an
algebraic sum of products of at least two of its arguments. Hence, for large $m$, the Bayes
risk is%
\begin{equation}
\label{R}
R_{m}(p)=E\left[ \sum\limits_{i=1}^{n}\frac{U_{i}}{m_{i}+r_{i}}\right]
+\sum\limits_{i=1}^{n}o\left( \frac{1}{m_{i}}\right)
\end{equation}%
where $U_{i}=E\left[ V_{i}/\mathcal{F}_{m}\right]$ and
$V_{i}=p_{i}\left( 1-p_{i}\right) \prod_{j\neq i}\left( 1-p_{j}\right)^{2}$. It should be
pointed that, for all $i$ in $\{1,...,n\}$, $U_{i}={U_{i}}_{m}$ is a sequence of random variables
(the index $m$ will be omitted later to simplify the notations) depending on the allocation numbers
$m_{1},...,m_{n}$ as follows:
\begin{equation}
\label{ui}
{U_{i}}_{m}=\frac{a_{im_{i}}b_{im_{i}}}{\left( m_{i}+r_{i}\right) \left(
m_{i}+r_{i}+1\right) }\prod\limits_{j\neq i}\frac{a_{jm_{j}}\left(
a_{jm_{j}}+1\right) }{\left( m_{j}+r_{j}\right) \left( m_{j}+r_{j}+1\right) }
\end{equation}%
where $a_{im_{i}}$ and $b_{im_{i}}$ are given by expressions (\ref{aimi})
and (\ref{bimi}).

So, the Bayes risk can be approximated, for large samples, by%
\begin{equation*}
\tilde{R}_{m}(p)=E\left[ \sum\limits_{i=1}^{n}\frac{U_{i}}{m_{i}+r_{i}}\right]
\end{equation*}
It should be pointed that $m_{i}$ is treated as a random variable while the total sample size $m$ is assumed to be fixed.
Then, the approximated Bayes risk can be rewritten, thanks to Lagrange's identity, as follows: 
\begin{equation}
\label{Rtilde}
\tilde{R}_{m}(p)=\frac{E\left[ \left( \sum\limits_{i=1}^{n}\sqrt{U_{i}}\right)
^{2}+\sum\limits_{i=1}^{n-1}\sum\limits_{j=i+1}^{n}\frac{\left( \left(
m_{i}+r_{i}\right) \sqrt{U_{j}}-\left( m_{j}+r_{j}\right) \sqrt{U_{i}}%
\right) ^{2}}{\left( m_{i}+r_{i}\right) \left( m_{j}+r_{j}\right) }\right]}{%
\left( m+\sum\limits_{i=1}^{n}r_{i}\right)}
\end{equation}

\subsection{Two-stage sequential design for the parallel system}
\label{ss3}
Following the results of the previous section, sequential procedures can be developed for
the minimization of the Bayes risk when the sample size $m$ is fixed. Since $m_{i}+r_{i}$ and $m_{i}$
are of the same order for large samples, then one can choose $m_{i}$
such that%
\begin{equation*}
m_{i}\sqrt{U_{j}}=m_{j}\sqrt{U_{i}}
\end{equation*}%
for all $i,j$ in $\{1,...,n\}$; which gives the conditions%
\begin{equation*}
m_{i}=m\frac{\sqrt{U_{i}}}{\sum\limits_{j=1}^{n}\sqrt{U_{j}}}
\end{equation*}%
for all $i$ in $\{1,...,n\}$.

Assume $m$ fixed, then the allocation numbers $m_{1},...,m_{n}$ are determined by the following sequential
procedure in two stages as follows. Denote by $[x]$ the integer of $x$.

\begin{description}
\item[Stage one]
Sample $L=\left[ \sqrt{m}\right]$ units from each component $i$ and evaluate the
predictor%
\begin{equation*}
\hat{m}_{i}=m\frac{\sqrt{U_{iL}}}{\sum\limits_{j=1}^{n}\sqrt{U_{jL}}},%
\end{equation*}%
for all $i$ in $\{1,\ldots ,n\}$, where 
\begin{equation}
\label{uil}
U_{iL}=\frac{a_{iL}b_{iL}}{\left( L+r_{i}\right) \left( L+r_{i}+1\right) }%
\prod\limits_{j\neq i}\frac{a_{jL}\left( a_{jL}+1\right) }{\left(
L+r_{j}\right) \left( L+r_{j}+1\right) }
\end{equation}

\item[Stage two]
Sample $m-nL$ units of which $m_{i}-L$ are units from
component $i$ and where $m_{i}$ is the corrector defined by%
\begin{eqnarray*}
m_{i} &=&\max \left\{ L,[\hat{m}_{i}]\right\} :~i=1,\ldots ,n-1 \\
m_{n} &=&m-\sum\limits_{i=1}^{n-1}m_{i}
\end{eqnarray*}
\end{description}

\subsection{Asymptotic optimality}
\label{ss4}
Asymptotics of the Bayes risk incurred by the two-stage scheme are based
on the following lemma which is itself a consequence of the sampling procedure.

\begin{lemma}
\label{lem 411}%
The integer $m_{i}$ given by the two-stage scheme satisfies, for all
$i$ in $\{1,\ldots ,n\}$,
\begin{equation*}
\lim_{m\rightarrow +\infty }\frac{m_{i}}{m}=\frac{\sqrt{V_{i}}}{%
\sum\limits_{j=1}^{n}\sqrt{V_{j}}}, ~a.s.
\end{equation*}
\end{lemma}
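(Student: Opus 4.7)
The plan is to reduce the claim to a conditional strong law of large numbers applied componentwise during stage one, and then pass the resulting almost sure limits through the deterministic formula that defines $\hat{m}_i$ and $m_i$.

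First, observe that $L=[\sqrt{m}]\to+\infty$ as $m\to+\infty$, so the first-stage sample size per component grows without bound. Conditionally on the vector $(p_1,\ldots,p_n)$, the observations $x_i^{(1)},x_i^{(2)},\ldots$ are i.i.d.\ Bernoulli$(p_i)$ and independent across $i$, hence by the SLLN, $\frac{1}{L}\sum_{k=1}^{L}x_i^{(k)}\to p_i$ a.s. By formulas (\ref{aimi})--(\ref{bimi}), this gives $a_{iL}/L\to p_i$ and $b_{iL}/L\to 1-p_i$ almost surely for every $i$. Unconditionally these limits hold on a full measure set, because the beta prior assigns probability one to $\{p_i\in(0,1)\}$.

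Next, substitute into the expression (\ref{uil}) for $U_{iL}$. Each factor in the numerator is asymptotically of order $L^2$ while each denominator is $(L+r_j)(L+r_j+1)\sim L^2$, so after cancelling these powers of $L$ the continuous mapping argument yields $U_{iL}\to V_i$ a.s., with $V_i=p_i(1-p_i)\prod_{j\neq i}(1-p_j)^2$ as defined after (\ref{R}). Because $p_i\in(0,1)$ a.s., we also have $V_i>0$ and $\sum_{j=1}^n\sqrt{V_j}>0$ almost surely. Applying continuity of the square root and of the normalising map $(u_1,\ldots,u_n)\mapsto u_i/\sum_j u_j$ on the positive cone therefore gives
\begin{equation*}
\frac{\hat{m}_i}{m}=\frac{\sqrt{U_{iL}}}{\sum_{j=1}^n\sqrt{U_{jL}}}\;\longrightarrow\;\frac{\sqrt{V_i}}{\sum_{j=1}^n\sqrt{V_j}}\quad\text{a.s.}
\end{equation*}

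Finally I handle the truncation and rounding in the corrector. For $i\leq n-1$ the definition $m_i=\max\{L,[\hat{m}_i]\}$ satisfies $|m_i-\hat{m}_i|\leq L+1$, and since $L/m=[\sqrt{m}]/m\to 0$, dividing by $m$ preserves the almost sure limit obtained above. Because the limiting proportion is strictly positive a.s., the $\max$ with $L$ is eventually inactive, so no distortion is introduced. For $i=n$, the identity $m_n/m=1-\sum_{i=1}^{n-1}m_i/m$ combined with the limits already established yields $m_n/m\to 1-\sum_{i=1}^{n-1}\sqrt{V_i}/\sum_j\sqrt{V_j}=\sqrt{V_n}/\sum_j\sqrt{V_j}$ a.s., completing the proof. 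The only slightly delicate point, and the one to treat carefully, is the justification that the denominator $\sum_j\sqrt{U_{jL}}$ is bounded away from zero on the limit set, which is secured by the beta prior putting zero mass on the boundary of $(0,1)^n$.
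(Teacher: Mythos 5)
Your proof is correct, but it takes a genuinely different route from the paper. The paper's argument is essentially one line at the key step: since $U_{iL}=E\left[ V_{i}/A_{L}\right]$ with $V_{i}$ integrable, the martingale (Lévy upward) convergence theorem gives $U_{iL}\rightarrow V_{i}$ a.s.\ as $L\rightarrow\infty$, and then the same bookkeeping as yours ($L/m\rightarrow 0$, integer part, the $\max$ with $L$ eventually inactive) finishes the proof. You instead bypass the martingale structure entirely: conditionally on $(p_{1},\ldots,p_{n})$ you apply the SLLN to get $a_{iL}/L\rightarrow p_{i}$ and $b_{iL}/L\rightarrow 1-p_{i}$ a.s., then push these limits through the explicit posterior-moment formula for $U_{iL}$ by continuous mapping. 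This is more elementary and, in fact, supplies something the paper's one-liner leaves implicit: the martingale theorem only gives $U_{iL}\rightarrow E\left[V_{i}/A_{\infty}\right]$, and identifying that limit with $V_{i}$ requires exactly your SLLN observation that the $p_{j}$ are ($A_{\infty}$-)measurable limits of the empirical frequencies. Your treatment of the corrector (the bound $|m_{i}-\hat{m}_{i}|\leq L+1$, the positivity of the limiting proportions making the $\max$ with $L$ inactive, and the complementary formula for $m_{n}$) is also more explicit than the paper's. One caution: when you say the factors of (\ref{uil}) cancel to give $U_{iL}\rightarrow V_{i}$, note that the formula as printed, with $a_{jL}\left(a_{jL}+1\right)$ for $j\neq i$, would converge to $p_{i}\left(1-p_{i}\right)\prod_{j\neq i}p_{j}^{2}$; the identity $U_{iL}=E\left[V_{i}/A_{L}\right]$ forces the numerator $b_{jL}\left(b_{jL}+1\right)$ (an apparent typo in (\ref{ui}) and (\ref{uil})), and your limit $V_{i}$ is the one consistent with that corrected expression, so you should make that substitution explicit rather than asserting the cancellation.
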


\begin{proof}
Since $L=\left[ \sqrt{m}\right]$, then as $m\rightarrow +\infty $, $\frac{L}{m}\rightarrow 0$.
It follows that for large $m$,
\begin{equation*}
m_{i}=\left[m\frac{\sqrt{U_{iL}}}{\sum\limits_{j=1}^{n}\sqrt{U_{jL}}}\right]
\end{equation*}%
Moreover, as $m\rightarrow +\infty $, $L\rightarrow \infty$, and
\begin{equation*}
\frac{\sqrt{U_{iL}}}{\sum\limits_{j=1}^{n}\sqrt{U_{jL}}}\rightarrow \frac{%
\sqrt{V_{i}}}{\sum\limits_{j=1}^{n}\sqrt{V_{j}}},~a.s.,
\end{equation*}%
since, for all $i$ in $\{1,\ldots ,n\}$, $V_{i}$ is integrable and $U_{iL}=E\left[ V_{i}/A_{L}\right]$
with $A_{L}=\sigma ( X_{1}^{(1)},\ldots ,X_{1}^{(L)},\ldots
,X_{n}^{(1)},\ldots ,X_{n}^{(L)})$. The proof follows directly by the properties of the integer part at infinity.
\end{proof}

\begin{theorem}
\label{th mrm}Let $R_{m}(p)$ the Bayes risk incurred by the two-stage scheme, then%
\begin{equation*}
\lim_{m\rightarrow +\infty }mR_{m}(p)=E\left[ \left( \sum\limits_{i=1}^{n}%
\sqrt{V_{i}}\right) ^{2}\right]
\end{equation*}
\end{theorem}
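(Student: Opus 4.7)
The plan is to reduce the theorem to the asymptotics of the approximate Bayes risk $\tilde R_m(p)$ and then combine Lemma~\ref{lem 411} with martingale convergence. Starting from (\ref{R}),
$$R_m(p)-\tilde R_m(p)=\sum_{i=1}^{n}o(1/m_i),$$
and under the two-stage design $m_i\ge L=[\sqrt m]\to\infty$ a.s., with $m/m_i$ converging a.s.\ to the finite random variable $\sum_k\sqrt{V_k}/\sqrt{V_i}$ by Lemma~\ref{lem 411}, so after multiplication by $m$ the remainder vanishes in expectation. It therefore suffices to prove $m\tilde R_m(p)\to E[(\sum_i\sqrt{V_i})^2]$.

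I then exploit the Lagrange-identity expression (\ref{Rtilde}) to split
$$m\tilde R_m(p)=\frac{m}{m+\sum_i r_i}\Bigl(E\bigl[(\textstyle\sum_i\sqrt{U_i})^2\bigr]+E[B_m]\Bigr),$$
where the prefactor tends to $1$ and $B_m$ denotes the cross-imbalance sum. The first term is handled by a martingale argument: each $U_i=E[V_i\mid\mathcal F_m]$ is a martingale bounded by $1$, and because every $m_i\to\infty$ a.s.\ the posterior of $p_i$ concentrates at the true parameter, so $\sigma(p_1,\dots,p_n)\subseteq\mathcal F_\infty$ and L\'evy's upward convergence gives $U_i\to V_i$ a.s. Bounded convergence (with $U_i,V_i\in[0,1]$) then yields $E[(\sum_i\sqrt{U_i})^2]\to E[(\sum_i\sqrt{V_i})^2]$.

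For the cross term, Lemma~\ref{lem 411}, the limit $U_i\to V_i$, and the positivity of $V_i$ under the beta prior give
$$\frac{(m_i+r_i)\sqrt{U_j}-(m_j+r_j)\sqrt{U_i}}{m}\longrightarrow 0,\qquad \frac{(m_i+r_i)(m_j+r_j)}{m^{2}}\longrightarrow \frac{\sqrt{V_iV_j}}{(\sum_k\sqrt{V_k})^{2}}>0,$$
so each summand of $B_m$ tends a.s.\ to $0$. To pass to expectation I condition on the first-stage $\sigma$-field $\mathcal A_L$: since $m_i\ge[\hat m_i]$ and the predictor $\hat m_i=m\sqrt{U_{iL}}/\sum_k\sqrt{U_{kL}}$ is $\mathcal A_L$-measurable, the tower identity $E[U_i\mid\mathcal A_L]=U_{iL}$ (valid because $\mathcal A_L\subseteq\mathcal F_m$) yields
$$E\Bigl[\tfrac{m\,U_i}{m_i+r_i}\,\Big|\,\mathcal A_L\Bigr]\le\tfrac{\sum_k\sqrt{U_{kL}}}{\sqrt{U_{iL}}}\,U_{iL}\,(1+o(1))=\sqrt{U_{iL}}\,\textstyle\sum_k\sqrt{U_{kL}}\,(1+o(1))\le n+o(1),$$
a uniformly bounded, $\mathcal A_L$-measurable dominator that also converges a.s.\ to $\sqrt{V_i}\sum_k\sqrt{V_k}$. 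Summing over $i$, bounded convergence gives $\limsup_m E[m\sum_i U_i/(m_i+r_i)]\le E[(\sum_i\sqrt{V_i})^2]$, and the matching Fatou lower bound obtained by applying Cauchy--Schwarz directly to (\ref{Rtilde}) forces $E[B_m]\to 0$ and hence the theorem.

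The hardest point is precisely this exchange of limit and expectation in the cross term: the raw ratios $(m_j+r_j)/(m_i+r_i)$ are only a.s.\ bounded and could be as large as $O(\sqrt m)$ on the rare event $\{\hat m_i\le L\}$, so no direct $L^\infty$ dominating function is available. The resolution relies on the conditional identity $E[U_i\mid\mathcal A_L]=U_{iL}$ married to the specific form $\hat m_i\propto\sqrt{U_{iL}}$ of the two-stage predictor; the algebraic cancellation between $1/\sqrt{U_{iL}}$ and $U_{iL}$ is what produces the bounded dominator $\sqrt{U_{iL}}\sum_k\sqrt{U_{kL}}\le n$ and makes the whole argument go through.
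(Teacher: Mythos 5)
Your argument is correct in substance and follows the paper's skeleton for two of its three steps: the reduction of $R_{m}(p)$ to $\tilde{R}_{m}(p)$ via Lemma (\ref{lem 411}), and the limit (\ref{p1}), which you get from the a.s.\ convergence $U_{i}\rightarrow V_{i}$ plus boundedness and dominated convergence (the paper phrases the same fact through uniform integrability of the martingale $U_{i}$ and $L^{2}$ convergence of $\sqrt{U_{i}}$; since $0\leq U_{i}\leq 1$ the two justifications are equivalent). Where you genuinely diverge is the cross term (\ref{p2}). The paper proves convergence in probability and then uniform integrability, dominating $\frac{m}{m_{j}}U_{j}$ by a maximal function $\max_{l\geq 0}(\cdot)$ and appealing to Doob's inequality, a step it leaves rather vague. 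You bypass uniform integrability altogether: conditioning on the first-stage field $\mathcal{A}_{L}$, on which $m_{1},\ldots ,m_{n}$ are measurable, the tower identity $E\left[ U_{i}/\mathcal{A}_{L}\right] =U_{iL}$ and the proportionality $\hat{m}_{i}\propto \sqrt{U_{iL}}$ cancel to give the explicit bounded dominator $\sqrt{U_{iL}}\sum_{k}\sqrt{U_{kL}}\leq n$, whence $\limsup_{m}E\left[ \sum_{i}mU_{i}/(m_{i}+r_{i})\right] \leq E\left[ \left( \sum_{i}\sqrt{V_{i}}\right) ^{2}\right]$, and the trivial lower bound read off from (\ref{Rtilde}) squeezes the limit and yields (\ref{p2}) as a byproduct. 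This buys a more elementary, self-contained proof (no maximal inequality, no separate UI verification) and isolates exactly the algebraic cancellation that makes the two-stage predictor work. One point to tighten: the inequality $m_{i}\geq \left[ \hat{m}_{i}\right]$ you invoke holds only for $i\leq n-1$; the balancing index satisfies only $m_{n}\geq \hat{m}_{n}-(n-1)L$, so for $i=n$ you need a short case distinction (if $\hat{m}_{n}\gg L$ the same cancellation applies up to a factor $1+o(1)$; if, say, $\hat{m}_{n}\leq L^{3/2}$ then $U_{nL}=O(m^{-1/2})$ and the fact that the scheme allots at least $L$ observations to every component already gives a bounded term). This is a minor repair, and it concerns a point the paper's own proof glosses over in exactly the same way when it asserts $\frac{m}{m_{j}}\leq \sum_{i}\sqrt{U_{iL}}/\sqrt{U_{jL}}$ for every $j$, including the balancing one.
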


\begin{proof}
A first consequence of lemma (\ref{lem 411}) is
\begin{equation}
\label{p0}
\lim_{m\rightarrow +\infty }mR_{m}(p)=\lim_{m\rightarrow +\infty }m\tilde{R}_{m}(p),
\end{equation}
where $\tilde{R}_{m}(p)$ is the asymptotic Bayes risk given by the expression (\ref{Rtilde}).

Hence, the proof follows if one shows that
\begin{equation}
\label{p1}
\lim_{m\rightarrow +\infty }E\left[ \left( \sum\limits_{i=1}^{n}\sqrt{U_{i}}%
\right) ^{2}\right] =E\left[ \left( \sum\limits_{i=1}^{n}\sqrt{V_{i}}\right)
^{2}\right]
\end{equation}%
and
\begin{equation}
\label{p2}
\lim_{m\rightarrow +\infty }E\left[ \frac{\left( \left( m_{i}+r_{i}\right) 
\sqrt{U_{j}}-\left( m_{j}+r_{j}\right) \sqrt{U_{i}}\right) ^{2}}{\left(
m_{i}+r_{i}\right) \left( m_{j}+r_{j}\right) }\right] =0.
\end{equation}%

Since $U_{i}=E\left[ V_{i}/\mathcal{F}_{m}\right]$ and
$V_{i}\in L^{1}$, the martingale $U_{i}$ is uniformly integrable
and consequently, as $m\rightarrow +\infty$, $U_{i}\rightarrow V_{i}$ in $L^{1}$ and almost surely,
for all $i$ in $\{1,...,n\}$. Martingales properties yield $\sqrt{U_{i}}\rightarrow \sqrt{V_{i}}~in~L^{2}$,
and the first identity (\ref{p1}) follows.

For the second equality (\ref{p2}), lemma (\ref{lem 411}) implies that%
\begin{equation*}
\frac{\left( \left( m_{i}+r_{i}\right) \sqrt{U_{j}}-\left(
m_{j}+r_{j}\right) \sqrt{U_{i}}\right) ^{2}}{\left(
m_{i}+r_{i}\right) \left( m_{j}+r_{j}\right) }
\end{equation*}%
converges to zero in probability, as $m\rightarrow+\infty$, for all $i,j$ in $\{1,...,n\}$.
Hence, it will be sufficient to show that this sequence is uniformly integrable.
This may be true if it is bounded by an integrable random variable. So,%
\begin{equation*}
\frac{\left( \left( m_{i}+r_{i}\right) \sqrt{U_{j}}-\left(
m_{j}+r_{j}\right) \sqrt{U_{i}}\right) ^{2}}{\left(
m_{i}+r_{i}\right) \left( m_{j}+r_{j}\right) }\leq \frac{%
m_{i}+r_{i}}{m_{j}+r_{j}}U_{j}+\frac{m_{j}+r_{j}}{
m_{i}+r_{i} }U_{i},
\end{equation*}%
\begin{equation*}
\frac{m_{i}+r_{i}}{m_{j}+r_{j}}U_{j}\leq \frac{m}{m_{j}}%
U_{j}+r_{i}U_{j}.
\end{equation*}%
Since, for large $m$,%
\begin{equation*}
\frac{m}{m_{j}}\leq \frac{\sum\limits_{i=1}^{n}\sqrt{U_{iL}}}{%
\sqrt{U_{jL}}}
\end{equation*}%
then%
\begin{equation*}
\frac{m}{m_{j}}\leq \max_{l\geq 0}\left( \sum\limits_{i=1}^{n}\sqrt{%
U_{il}U_{jl}}\right)
\end{equation*}%
Applying Doob's inequality to the right hand side of this last inequality,
one shows that this term is integrable since, for all $j$ in $\{1,...,n\}$, $U_{j}$ is bounded.

Similarly, one can do the same with the term%
\begin{equation*}
\frac{m_{j}+r_{j}}{m_{i}+r_{i}}U_{i},
\end{equation*}%
and the identity (\ref{p2}) holds.

The proof of the theorem follows from (\ref{p0}), (\ref{p1}) and (\ref{p2}).
\end{proof}
\begin{remark}
\label{rem1}
Using the duality between the parallel and the series configurations, all the results
obtained for a parallel system can be adapted straightforwardly to a series system.
More precisely, one has just to alternate the roles  between components reliabilities
$(p_{i})$, respectively system reliability $(p)$, and probabilities of failure $(q_{i}=1-p_{i})$,
respectively $(q=1-p)$; the Bayes risk remaining the same.
\end{remark}
\section{Extension to a parallel-series system}
\label{s2}
\subsection{Preliminaries}
\label{ss21}
We consider now a parallel-series system $S$ of $n$ subsystems $S_{1},...,S_{n}$ connected
in series, each subsystem $S_{i}$ has a reliability $p_{i}$ and contains $n_{i}$
components $C_{i1},...,C_{in_{i}}$ connected in parallel, each component $C_{ij}$ has a reliability $p_{ij}$.

As in section (\ref{s1}), beta priors are assumed on the proportions $p_{ij}$,%
\begin{equation*}
p_{ij}\sim \beta \left( a_{ij},b_{ij}\right),
\end{equation*}
where $a_{ij}$ and $b_{ij}$ are positive and known parameters. Assuming independence across and within
populations, reliabilities estimates are shortly summarized as bellow.
\begin{eqnarray}
\hat{p}&=&\prod\limits_{i=1}^{n} \hat{p_{i}}\\
&=&\prod\limits_{i=1}^{n}\left( 1-\prod\limits_{j=1}^{n_{i}}\left(1-\hat{p_{ij}}\right)\right)
\end{eqnarray}
where $\hat{p_{ij}}$ is the posterior mean of the Bernoulli
proportion $p_{ij}$. The posterior distribution of $p_{ij}$ is given by $\beta
\left( a_{ijm_{ij}},b_{ijm_{ij}}\right) $ with 
\begin{eqnarray}
a_{ijm_{ij}} &=&a_{ij}+\sum_{k=1,m_{ij}}x_{ij}^{(k)},  \label{aijmij} \\
b_{ijm_{ij}} &=&b_{ij}+m_{ij}-\sum_{k=1,m_{ij}}x_{ij}^{(k)},  \label{bijmij}
\end{eqnarray}%
where $x_{ij}^{(k)}$ is the binary outcome of unit $(k)$ in component $%
C_{ij}$ and $m_{ij}$ the corresponding sample size. 
\subsection{The Bayes risk}
\label{ss22}
The integer $\sum_{j}m_{ij}=m_{i}$ is the sample size in the subsystem $S_{i}$, while $\sum_{i}m_{i}=m$
is the total sample size in the system. Denote by $\mathcal{\tilde{F}}_{m}$ the $\sigma $-field generated
by $\left( X_{1},...,X_{n}\right) $, where $X_{i}=( X_{i1},...,X_{in_{i}})$ and
$X_{ij}=(X_{ij}^{(1)},...,X_{ij}^{(m_{ij})})$.

Assuming quadratic loss, typically $\hat{p}$, $\hat{p_{i}}$, respectively $\hat{p_{ij}}$ are Bayes
estimators of system, subsystems, respectively components reliabilities. The Bayes risk $R_{m}(p)$ is the mean with respect
to evidence of the posterior variance. With the help of independence, one can write
\begin{eqnarray*}
Var\left( p/\mathcal{\tilde{F}}%
_{m}\right) &=&E\left[ \prod\limits_{i=1}^{n}p_{i}^{2}/\mathcal{\tilde{F}}%
_{m}\right] -E^{2}\left[ \prod\limits_{i=1}^{n}p_{i}/\mathcal{\tilde{F}}_{m}%
\right] \\
&=&\prod\limits_{i=1}^{n}E\left[ p_{i}^{2}/\mathcal{\tilde{F}}_{m}\right]
-\prod\limits_{i=1}^{n}E^{2}\left[ p_{i}/\mathcal{\tilde{F}}_{m}\right]\\
&=&\prod\limits_{i=1}^{n}E\left[ p_{i}^{2}/\mathcal{\tilde{F}}%
_{m}\right] -\prod\limits_{i=1}^{n}\left( E\left[ p_{i}^{2}/\mathcal{\tilde{F%
}}_{m}\right] -Var\left[ p_{i}/\mathcal{\tilde{F}}_{m}\right] \right) .
\end{eqnarray*}%
Hence, the Bayes risk is given by
\begin{equation}
\label{br1}
R_{m}(p)=\prod\limits_{i=1}^{n}E\left[ p_{i}^{2}\right] -\prod\limits_{i=1}^{n}%
\left( E\left[ p_{i}^{2}\right] -R_{m_{i}}\left( p_{i}\right) \right) ,
\end{equation}%
where $R_{m_{i}}\left( p_{i}\right) $ is the Bayes risk corresponding to the subsystem $S_{i}$.
\subsection{Asymptotics induced by the two-stage at component level}
\label{ss23}
For each $m_{i}$ fixed (large), if one determines the partition $\{m_{ij},~j=1,...,n_{i}\}$ according to the
two-stage design defined in the previous section for the parallel subsystem $S_{i}$, then the corresponding
Bayes risk can be written, thanks to theorem (\ref{th mrm}), as follows.
\begin{equation}
R_{m_{i}}\left( p_{i}\right) =\frac{B_{i}}{m_{i}}+o\left( \frac{1}{m_{i}}\right),
\end{equation}
with $B_{i}$ constant,
\begin{equation*}
B_{i}=E\left[ \left( \sum\limits_{j=1}^{n_{i}}\sqrt{V_{ij}}\right) ^{2}\right],
\end{equation*}
and $V_{ij}$ are random variables defined similarly as in subsection (\ref{BR}).
It follows that
\begin{eqnarray*}
R_{m}(p) &=&\prod\limits_{i=1}^{n}E\left[ p_{i}^{2}\right] -\prod\limits_{i=1}^{n}%
\left( E\left[ p_{i}^{2}\right] -\frac{B_{i}}{m_{i}}\right)
+\sum\limits_{i=1}^{n}o\left( \frac{1}{m_{i}}\right) \\
&=&\tilde{R}_{m}(p)+\sum\limits_{i=1}^{n}o\left( \frac{1}{m_{i}}\right),
\end{eqnarray*}
where
\begin{equation*}
\tilde{R}_{m}(p)=E\left[ \sum\limits_{i=1}^{n}\frac{B_{i}w_{i}}{m_{i}}\right],
\end{equation*}%
is the asymptotic Bayes risk, and $w_{i}=E[ Z_{i}/\mathcal{\tilde{F}}_{m}]$, where
\begin{equation*}
Z_{i}=\prod\limits_{l=1\ {l\neq i}}^{n}p_{l}^{2}
\end{equation*}
A calculus based on posterior distributions gives
\begin{eqnarray}
w_{i} &=&\prod\limits_{l\neq i}\left[ 1+\prod\limits_{j=1}^{n_{l}}\frac{%
\left( b_{lj}+m_{lj}-\sum\limits_{k=1}^{m_{lj}}x_{lj}^{(k)}\right) \left(
b_{lj}+m_{lj}-\sum\limits_{k=1}^{m_{ij}}x_{lj}^{(k)}+1\right) }{\left(
a_{lj}+b_{lj}+m_{lj}\right) \left( a_{lj}+b_{lj}+m_{lj}+1\right) }\right. \nonumber \\ 
&&\left. -2\prod\limits_{j=1}^{n_{l}}\frac{\left(
b_{lj}+m_{lj}-\sum\limits_{k=1}^{m_{lj}}x_{lj}^{(k)}\right) }{\left(
a_{lj}+b_{lj}+m_{lj}\right) \left( a_{lj}+b_{lj}+m_{lj}+1\right) }\right]
\label{wi}
\end{eqnarray}%

\subsection{A hybrid sequential design for the parallel-series system}

It follows from Lagrange's identity that,
\begin{equation}
\tilde{R}_{m}(p)=\frac{E\left[ \left( \sum\limits_{i=1}^{n}\sqrt{B_{i}w_{i}}%
\right) ^{2}+\sum\limits_{i=1}^{n-1}\sum\limits_{k=i+1}^{n}\dfrac{\left( m_{i}%
\sqrt{B_{k}w_{k}}-m_{k}\sqrt{B_{i}w_{i}}\right) ^{2}}{m_{i}m_{k}}\right]}{m}
\label{rp1}
\end{equation}
Hence, for $m$ fixed, and according to this last identity, one can allocate sequentially the sample
size $m_{i}$ at subsystem level such that, for all $i,k$ in $\{1,...,n\}$, the conditions 
\begin{equation*}
m_{i}\sqrt{B_{k}w_{k}}=m_{k}\sqrt{B_{i}w_{i}}
\end{equation*}%
are satisfied, or equivalently,  for all $i$ in $\{1,...,n\}$,%
\begin{equation}
\label{mirem1}
m_{i}=m\frac{\sqrt{B_{i}w_{i}}}{\sum\limits_{k=1}^{n}\sqrt{B_{k}w_{k}}}.
\end{equation}%

\begin{remark}
\label{rem2}
Following the remark (\ref{rem1}), the criteria (\ref{mirem1}) is similar but not identical to the rule
expected for allocating $m_{i}$ as in a pure series system where each subsystem is reduced to a single component.
The difference lies in the weighting coefficients $B_{i}$ (the rates of convergence of the risks $R_{m_{i}}(p_{i})$)
which allow to take into account the estimation cost for each parallel subsystem reliability at component level.
\end{remark}

As a result, we propose the following hybrid two-stage scheme.
Let $L=[\sqrt{m}]$, respectively $\tilde{L}=[\sqrt{L}]$, the initial sample size for each subsystem
$S_{i}$, respectively each component $C_{ij}$.
\begin{description}
\item[Stage one]Sample $\tilde{L}$ units in each component of the system.
\begin{itemize}
\item[(i)] Evaluate $w_{i}$ (denoted by $\tilde{w}_{i}$) according to the formula
(\ref{wi}) with $m_{i}=L$ and $m_{ij}=\tilde{L}$.
\item[(ii)] Calculate the predictor at subsystem level,
\begin{equation*}
\tilde{m}_{i}=\left[m\frac{\sqrt{B_{i}\tilde{w}_{i}}}{\sum\limits_{j=1}^{n}\sqrt{
B_{j}\tilde{w}_{j}}}\right].
\end{equation*}
\end{itemize}
\item[Stage two]
Sample $m-nL$ units more for which $m_{i}-L$ are in the subsystem $S_{i}$, where%
\begin{eqnarray*}
m_{i} &=&\max \left\{ L,\tilde{m}_{i}\right\} :~i=1,\ldots ,n-1, \\
m_{n} &=&m-\sum\limits_{i=1}^{n-1}m_{i},
\end{eqnarray*}%
and calculate the sample size $m_{ij}$ at component level, according to the second stage of the
sequential scheme defined in subsection (\ref{ss3}), i.e.,
\begin{eqnarray*}
m_{ij} &=&\max \left\{ \tilde{L},\left[m_{i}\frac{\sqrt{U_{ij\tilde{L}}}}{%
\sum\limits_{k=1}^{n_{i}}\sqrt{U_{ik\tilde{L}}}}\right]\right\} :~j=1,\ldots
,n_{i}-1, \\
m_{in_{i}} &=&m_{i}-\sum\limits_{j=1}^{n_{i}-1}m_{ij},
\end{eqnarray*}%
where $U_{ij\tilde{L}}$ are evaluated properly by the relation (\ref{uil}).
\end{description}

\subsection{First order optimality}
As in the parallel case, cf. section (\ref{s1}), we have the following convergence result.
\begin{theorem}
\label{th-21}
The Bayes risk $R_{m}(p)$ incurred by the hybrid sequential design satisfies
\begin{equation*}
\lim_{m\rightarrow +\infty }m.R_{m}(p)=E\left[ \left( \sum\limits_{i=1}^{n}%
\sqrt{B_{i}Z_{i}}\right) ^{2}\right] .
\end{equation*}
\end{theorem}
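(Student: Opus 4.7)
The plan is to transplant the proof of Theorem~\ref{th mrm} to the outer (series) allocation, treating the weight $B_i w_i$ as the analogue of the parallel weight $U_i$ and $Z_i$ as the analogue of $V_i$. The first step is to establish a subsystem-level version of Lemma~\ref{lem 411}. Since $L=[\sqrt m]$ and $\tilde L=[\sqrt L]$ both tend to infinity while $L/m\to 0$, the corrector eventually gives $m_i=[\,m\sqrt{B_i\tilde w_i}/\sum_k\sqrt{B_k\tilde w_k}\,]$. Because $\tilde w_i$ is the conditional expectation of $Z_i$ given the $\sigma$-field generated by $\tilde L$ observations per component, martingale convergence yields $\tilde w_i\to Z_i$ almost surely, so
\[
\frac{m_i}{m}\longrightarrow\frac{\sqrt{B_iZ_i}}{\sum_{k=1}^n\sqrt{B_kZ_k}}\quad\text{a.s.}
\]

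With this almost sure allocation limit, I would invoke Section~\ref{ss23} to write $R_{m_i}(p_i)=B_i/m_i+o(1/m_i)$; since $m_i$ is of order $m$, the remainders aggregate to $o(1/m)$ in \eqref{br1}, so via the Lagrange identity \eqref{rp1} the theorem reduces to the two limits paralleling \eqref{p1} and \eqref{p2}:
\[
\lim_{m\to\infty} E\Bigl[\Bigl(\sum_{i=1}^n\sqrt{B_i w_i}\Bigr)^{2}\Bigr]=E\Bigl[\Bigl(\sum_{i=1}^n\sqrt{B_i Z_i}\Bigr)^{2}\Bigr],
\]
and, for every $i\neq k$,
\[
\lim_{m\to\infty} E\Bigl[\frac{(m_i\sqrt{B_kw_k}-m_k\sqrt{B_iw_i})^2}{m_im_k}\Bigr]=0.
\]
The first limit follows as in \eqref{p1}: $Z_i$ is bounded by $1$, so the martingale $w_i=E[Z_i\mid\tilde{\mathcal F}_m]$ is uniformly integrable; hence $w_i\to Z_i$ in $L^1$, $\sqrt{w_i}\to\sqrt{Z_i}$ in $L^2$ by boundedness, and the expectation converges.

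The second limit is the main obstacle. Step~1 already delivers convergence to zero in probability, so only uniform integrability remains. Reusing the device of \eqref{p2}, I would bound
\[
\frac{(m_i\sqrt{B_kw_k}-m_k\sqrt{B_iw_i})^2}{m_im_k}\le\frac{m_i}{m_k}\,B_k w_k+\frac{m_k}{m_i}\,B_i w_i,
\]
and then, for large $m$, control $m/m_k$ by $\sum_j\sqrt{B_j\tilde w_j}/\sqrt{B_k\tilde w_k}$. This last quantity is itself dominated by the supremum over the stage-one sample size of $\sum_j\sqrt{B_jB_k}\,\sqrt{\tilde w_{j}\tilde w_{k}}$, a maximal functional of a nonnegative bounded martingale; Doob's $L^2$ maximal inequality provides the integrable majorant. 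The delicate point is precisely this uniform-in-$m$ control of the ratio $m/m_k$, which is why I expect this to be the hard part of the argument. Assembling the three ingredients yields $\lim_{m\to\infty} mR_m(p)=E[(\sum_i\sqrt{B_iZ_i})^2]$ and completes the proof.
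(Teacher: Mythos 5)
Your proposal follows essentially the same route as the paper: you first prove the analogue of Lemma~\ref{lem 411} at subsystem level (this is exactly the paper's Lemma~\ref{lem 421}, resting on $w_i=E[Z_i/\mathcal{\tilde F}_m]\rightarrow Z_i$ a.s.), reduce via (\ref{br1}) and the Lagrange identity (\ref{rp1}) to the two limits (\ref{rtp1})--(\ref{rtp2}), obtain the first by uniform integrability of the bounded martingale $w_i$ and $L^2$ convergence of $\sqrt{w_i}$, and the second by the same cross-term bound, stage-one control of $m/m_k$, and Doob's maximal inequality that the paper borrows from Theorem~\ref{th mrm}. In fact your write-up spells out more of the details (including the delicate ratio bound) than the paper's own proof, which simply says ``similarly, as for theorem~(\ref{th mrm})''.
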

Following the arguments used in lemma (\ref{lem 411}), and the hybrid
sequential scheme, one can show that $m_{i}$ and $m_{ij}$ are of the same order at
infinity, for all $i,j$. The following lemma yields a similar comparison of $m$ and $m_{i}$.    
\begin{lemma}
\label{lem 421}%
The integer $m_{i}=\sum_{j=1,n_{i}}m_{ij}$ given by the hybrid sequential scheme satisfies,
for all $i$ in $\{1,\ldots ,n\}$,
\begin{equation*}
\lim_{m\rightarrow +\infty }\frac{m_{i}}{m}=\frac{\sqrt{B_{i}Z_{i}}}{%
\sum\limits_{k=1}^{n}\sqrt{B_{k}Z_{k}}}, ~a.s.
\end{equation*}
\end{lemma}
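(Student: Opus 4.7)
The plan is to mirror the argument given for Lemma \ref{lem 411}, with $Z_{i}$ now taking the role played by $V_{i}$ in the pure parallel setting and with $B_{i}$ entering as a deterministic weight. The conceptual core is to recognise $\tilde{w}_{i}$, obtained by evaluating (\ref{wi}) at $m_{l}=L$ and $m_{lj}=\tilde{L}$ for every $l\neq i$, as the conditional expectation $E[Z_{i}\mid \mathcal{G}_{m}]$, where $\mathcal{G}_{m}$ denotes the $\sigma$-field generated by the stage-one observations in subsystems $l\neq i$. This follows from beta--binomial conjugacy applied componentwise, together with the independence of the $p_{lj}$ across $(l,j)$, which factorises $E[p_{l}^{2}\mid\mathcal{G}_{m}]$ via the expansion $p_{l}^{2}=1-2\prod_{j}(1-p_{lj})+\prod_{j}(1-p_{lj})^{2}$; the resulting product over $l\neq i$ is precisely (\ref{wi}).

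Once this identification is in place, I would let $m\to\infty$. Since $\tilde{L}=[\sqrt{L}]$ and $L=[\sqrt{m}]$, we have $\tilde{L}\to\infty$, so the stage-one samples eventually resolve each $p_{lj}$. Because $Z_{i}\in[0,1]$ is integrable, L\'evy's upward martingale convergence theorem yields $\tilde{w}_{i}\to Z_{i}$ almost surely. By continuity of $(x_{1},\ldots,x_{n})\mapsto\sqrt{x_{i}}/\sum_{k}\sqrt{x_{k}}$ on $(0,\infty)^{n}$, this gives
\begin{equation*}
\alpha_{i}^{(m)}:=\frac{\sqrt{B_{i}\tilde{w}_{i}}}{\sum_{k=1}^{n}\sqrt{B_{k}\tilde{w}_{k}}}\longrightarrow\alpha_{i}:=\frac{\sqrt{B_{i}Z_{i}}}{\sum_{k=1}^{n}\sqrt{B_{k}Z_{k}}}\quad(m\to\infty),\ a.s.,
\end{equation*}
the denominator being a.s.\ strictly positive because each $p_{lj}\in(0,1)$ under the beta prior (so $Z_{k}>0$ a.s.) and each $B_{k}$ is a strictly positive constant.

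To close the argument, I use the standard fact that $[x_{m}]/m\to\alpha$ whenever $x_{m}/m\to\alpha$, which gives $\tilde{m}_{i}/m=[m\alpha_{i}^{(m)}]/m\to\alpha_{i}$ almost surely. Since $L/m=[\sqrt{m}]/m\to 0$ while $\alpha_{i}>0$ a.s., the truncation $m_{i}=\max\{L,\tilde{m}_{i}\}$ coincides with $\tilde{m}_{i}$ for all sufficiently large $m$ along almost every path, hence $m_{i}/m\to\alpha_{i}$ for $i=1,\ldots,n-1$. The residual identity $m_{n}=m-\sum_{i<n}m_{i}$ then forces $m_{n}/m\to 1-\sum_{i<n}\alpha_{i}=\alpha_{n}$, completing the claim.

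The main obstacle I expect is the first step, namely verifying that formula (\ref{wi}) evaluated at the stage-one counts is genuinely $E[Z_{i}\mid\mathcal{G}_{m}]$; this is a direct but slightly bookkeeping-heavy computation with the first and second beta posterior moments, and requires checking that the product structure across $l\neq i$ is preserved under conditional expectation thanks to cross-population independence. Once that identification is accepted, the rest of the argument is a transparent rerun of Lemma \ref{lem 411}, so I anticipate no further difficulty beyond checking that $\sqrt{B_{k}Z_{k}}>0$ almost surely, which the positivity of the beta parameters guarantees.
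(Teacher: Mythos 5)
Your proposal is correct and follows essentially the same route as the paper, which simply notes that the argument of Lemma \ref{lem 411} carries over because the $B_{i}$ are positive constants and the key fact is the almost sure convergence $\tilde{w}_{i}\rightarrow Z_{i}$ as $\tilde{L}\rightarrow+\infty$ (martingale/L\'evy convergence of the conditional expectations $E[Z_{i}\mid\mathcal{G}_{m}]$). Your write-up merely fills in the details the paper leaves implicit (the conjugacy identification of $\tilde{w}_{i}$, the a.s.\ positivity of $B_{k}Z_{k}$, and the handling of the truncation $\max\{L,\tilde{m}_{i}\}$ and of the residual index $m_{n}$), all of which are sound.
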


\begin{proof}
Similar to the proof of lemma (\ref{lem 411}), since the coefficients $B_{i}$ are constant. The key point is
that $w_{i}$ converges almost surely to $Z_{i}$ (integrable), as $m\rightarrow +\infty$, i.e., as $L$ or $\tilde{L} \rightarrow +\infty$. 
\end{proof}

\begin{proof}[Proof of theorem (\ref{th-21})]
As a first consequence of lemma (\ref{lem 421}),
\begin{eqnarray*}
\lim_{m\rightarrow +\infty }m.R_{m}(p)&=&\lim_{m\rightarrow +\infty }m.\tilde{R}_{m}(p)\\ 
&=&\lim_{m\rightarrow +\infty }E\left[
\left( \sum\limits_{i=1}^{n}\sqrt{B_{i}w_{i}}\right) ^{2}\right]  \\
&&+\sum\limits_{i=1}^{n-1}\sum\limits_{k=i+1}^{n}\lim_{m\rightarrow +\infty
}E\left[ \frac{\left( m_{i}\sqrt{B_{k}w_{k}}-m_{k}\sqrt{B_{i}w_{i}}\right)
^{2}}{m_{i}m_{k}}\right]
\end{eqnarray*}%
The proof follows if one shows that
\begin{equation}
\lim_{m\rightarrow +\infty }E\left[ \left( \sum\limits_{i=1}^{n}\sqrt{%
B_{i}w_{i}}\right) ^{2}\right] =E\left[ \left( \sum\limits_{i=1}^{n}\sqrt{%
B_{i}Z_{i}}\right) ^{2}\right]   \label{rtp1}
\end{equation}%
and 
\begin{equation}
\lim_{m\rightarrow +\infty }E\left[ \frac{\left( m_{i}\sqrt{B_{k}w_{k}}-m_{k}%
\sqrt{B_{i}w_{i}}\right) ^{2}}{m_{i}m_{k}}\right] =0.  \label{rtp2}
\end{equation}%
Similarly, as for theorem (\ref{th mrm}),  equality (\ref{rtp1}) follows from the uniform integrability
of the martingale $w_{i}=E\left[ Z_{i}/\mathcal{\tilde{F}}_{m}\right]$ and martingales properties which yield
a convergence in $L^{2}$ of $\sqrt{w_{i}}$ to $\sqrt{Z_{i}}$, as $m\rightarrow +\infty $.
All the same, the second identity is a consequence of lemma (\ref{lem 421}), the hybrid sequential
scheme, and uniform integrability of the sequence in (\ref{rtp2}) which follows with the help of Doob's inequality.
\end{proof}

\section{Conclusion}
The hybrid sequential scheme was constructed, based on the two-stage sampling scheme for each parallel subsystem
at component level and the sampling scheme for the series structure at subsystem level. The first order optimality
was obtained, mainly, by the martingale convergence properties and Doob's inequality. With minor changes, the series-parallel
systems may be treated similarly, using duality. Namely, the techniques discussed here can be tediously adapted for complex systems
involving a multi-criteria optimization problem under a set of constraints such as risk, system weight,
cost, performance and others.


\end{document}